\journal{Information Processing Letters}
\newtheorem{theorem}{Theorem}[section]
\newtheorem{definition}[theorem]{Definition}
\newtheorem{lemma}[theorem]{Lemma}
\DeclareMathOperator{\Lap}{Lap}
\begin{document}

\begin{frontmatter}



\title{Private Graph Colouring with Limited Defectiveness} 


\affiliation[DTU]{organization={Technical University of Denmark},
            addressline={Richard Petersens Plads},
            city={Kgs. Lyngby},
            postcode={2800},
            country={Denmark}}

\affiliation[SDU]{organization={University of Southern Denmark},
            addressline={Campusvej 55},
            city={Odense},
            postcode={5000},
            country={Denmark}}

\author[DTU]{Aleksander B. G. Christiansen}
\ead{abgch@dtu.dk}
            
\author[DTU]{Eva Rotenberg}
\ead{erot@dtu.dk}

\author[SDU]{Teresa Anna Steiner}
\ead{steiner@imada.sdu.dk}

\author[DTU]{Juliette Vlieghe\corref{cor1}}
\ead{jmvvl@dtu.dk}
\cortext[cor1]{Corresponding author}

\begin{abstract}
Differential privacy is the gold standard for privacy preserving data analysis, which is crucial in a wide range of disciplines. Vertex colouring is one of the most fundamental graph problems. In this paper, we study the vertex colouring problem in the differentially private setting. 

In this paper, we consider edge-differential privacy. 
To satisfy non-trivial privacy guarantees under this notion of privacy, a colouring algorithm needs to be defective: a colouring is $d$-defective if a vertex can share a colour with at most $d$ of its neighbours. Without defectiveness, any edge-differentially private colouring algorithm needs to assign $n$ different colours to the $n$ different vertices. We show the following lower bound for the defectiveness: any $\epsilon$-egde differentially private algorithm that returns a $c$-vertex colouring of a graph of maximum degree $\Delta > 0$ with high probability must have defectiveness at least $d \in \Omega \left(\frac{\log n}{\epsilon + \log c}\right)$.

We complement our lower bound by presenting an $\epsilon$-differentially private algorithm for $O\left(\frac{\Delta}{\log n}+\frac{1}{\epsilon}\right)$-colouring a graph with defectiveness at most $O\left(\log n\right)$.
\end{abstract}



\begin{keyword}
Differential Privacy\sep Graph Colouring\sep Defective Colouring


\end{keyword}

\end{frontmatter}



\paragraph{Acknowledgements}
This work was supported by the VILLUM Foundation grants VIL37507 ``Efficient Recomputations for Changeful Problems'' and VIL51463 ``Differential Privacy for String Algorithms and Data Structures''.

\section{Introduction and related work}

Graph colouring is a family of fundamental problems with many applications in computer science, including scheduling, routing, register allocation, visualisation, network analysis, and clustering problems.
The vertex colouring problem is the following: one wants to assign each vertex a colour such that no two neighbours have the same colour, that is, each colour class is an independent set.
In most applications of colouring, the edges of the graph represent some kind of conflict: the two adjacent vertices cannot be scheduled at the same time, cannot be visualised with the same colour, or cannot be in the same cluster or team. 
In this paper, we consider the hypothesis that a conflict is sensitive information, and ask whether it is possible to somehow approximately colour the vertices, without revealing their exact conflicts. Is it possible to group the vertices of the graph so that each vertex is only grouped with a limited number of vertices with whom it has a conflict? 
This notion of `approximately' colouring the graph is that of defective colouring:
a colouring is $(c, d)$-defective if it uses a colour palette of size $c$ and any vertex can share a colour with at most $d$ of its neighbours. In other words, we want to partition the set of vertices $V(G)$ into sets $V_1, \dots, V_c$ such that for every $i\in [c]$, each vertex in $V_i$ has at most $d$ neighbours in $V_i$. 
There has been a lot of interesting work on the defective colouring of graphs
\cite{D1987,BIMOR10,CowenGoddardJesurum97,
eaton1999defective,erdHos1967decomposition,
FRICK199345,Hendrey_Wood_2019,JING2022112637,
vskrekovski1999list,OssonaDeMendez2019,
HJWD2028,wood2018defective}, also in distributed graph algorithms~\cite{BarenboimElkin09,Kuhn09, BarenboimElkin11,FuchsKuhn23}.

There is a large body of work studying graphs under differential privacy, including estimating subgraph counts  \cite{DBLP:conf/innovations/BlockiBDS13,DBLP:conf/sigmod/ChenZ13,DBLP:journals/tods/KarwaRSY14,DBLP:conf/sigmod/ZhangCPSX15,DBLP:conf/ccs/SunXKYQWY19,DBLP:conf/uss/ImolaMC21,DBLP:conf/esa/FichtenbergerHO21,DBLP:conf/uss/ImolaMC22,DBLP:conf/ccs/ImolaMC22,DBLP:conf/icalp/EdenLRS23}, the degree distribution of the graph \cite{DBLP:conf/icdm/HayLMJ09,DBLP:conf/sigmod/DayLL16,DBLP:conf/focs/RaskhodnikovaS16,DBLP:journals/compsec/ZhangNF21,DBLP:conf/esa/FichtenbergerHO21}, and the densest subgraph  \cite{DBLP:conf/icml/NguyenV21, DBLP:conf/aistats/0001HS22,DBLP:conf/focs/DhulipalaLRSSY22}, as well as approximating the minimum spanning tree and computing clusterings \cite{DBLP:conf/stoc/NissimRS07,DBLP:conf/pods/NissimSV16,DBLP:conf/pods/HuangL18,DBLP:conf/nips/StemmerK18,DBLP:conf/iccns/LinGHZL19,DBLP:conf/icml/CohenKMST21,DBLP:conf/esa/FichtenbergerHO21,DBLP:conf/kdd/Cohen-AddadELMM22,DBLP:conf/focs/DhulipalaLRSSY22} and cuts and shortest paths \cite{DBLP:conf/soda/GuptaLMRT10,DBLP:conf/pods/Sealfon16,DBLP:conf/nips/AroraU19,DBLP:conf/soda/EliasKKL20,DBLP:conf/pods/Stausholm21,DBLP:conf/soda/ChenG0MNNX23,DBLP:conf/wads/DengGUW23}.

When studying differential privacy, the hope is to give algorithms that disclose some information or analysis of the whole data set (in this case, the graph), while keeping the individual data points (in this case, the edges) private.  
A graph colouring, and especially one with few colours, gives significant insight into the graph structure, which should intuitively reveal much of the private information.
We formalise this intuition by giving a lower bound on the defectiveness of any differentially private colouring.

\subsection{Roadmap}
In section \ref{sec:lower_bound}, we prove the following lower bound: any algorithm that $c$-vertex colour with high probability
graphs of $n$ nodes of degree at most $\Delta$ must have a defectiveness at least $d \in \Omega \left(\frac{\log n}{\log c+\log \Delta}\right)$. The proof builds on the following idea: Given a valid $(c, d)$-defective colouring on a graph $G$, if there exist $d+2$ nodes of the same colour, then there is a $(d+1)$-neighbouring graph to $G$ on which this colouring is invalid: namely, connecting these $d+2$ nodes into a star. This can be used to upper-bound the probability of any colouring producing $d+2$ nodes of the same colour. However, if $c<\frac{n}{d+1}$, then for any valid $(c,d)$-defective colouring there \emph{must} exist $d+2$ nodes of the same colour. Combining these two observations yields our lower bound. 

In section \ref{sec:algorithms}, we propose a simple colouring algorithm. With Chernoff bounds and a random colouring, we can get defectiveness $\Omega(\log n)$, so we minimise the number of colours used, which results in a $\epsilon$-edge differentially private $\left(O\left(\frac{\Delta}{\log n}+\frac{1}{\epsilon}\right), O\left(\log n\right)\right)$ colouring. 

\subsection{Definitions and notations}

We recall some basic definitions and lemmas regarding differential privacy in graphs.

\begin{definition}[Edge-neighbouring graphs~\cite{DBLP:conf/icdm/HayLMJ09}]
    Let $n\in \mathbb{N}$. Let $G = (V, E)$ and $G' = (V, E')$ be two graphs on $n$ nodes. We say that $G$ and $G'$ are $k$-edge-neighbouring if the cardinality of the symmetric difference of $E$ and $E'$ is $k$. In particular, we say that $G$ is an edge-neighbouring graph of $G'$ if they differ by exactly one edge. 
\end{definition}
In this work, we only consider edge-neighbouring (as opposed to \emph{node-neighbouring}, see~\cite{DBLP:conf/icdm/HayLMJ09}). We will thus refer to edge-neighbouring graphs as simply \emph{neighbouring} graphs.

\begin{definition}[Differential privacy \cite{DBLP:conf/tcc/DworkMNS06}]
    A randomised algorithm $A$ is $(\epsilon, \delta)$-differentially private if for all neighbouring inputs $x, y$ and for all possible sets of outputs $S\subseteq \mathrm{range}(A)$:
    \[\Pr(A(x)\in S) \leq e^\epsilon \Pr(A(y)\in S) + \delta.\]
    A randomised algorithm is $\epsilon$-differentially private if it is $(\epsilon,0)$-differentially private.
\end{definition}


In this work, we focus on $\epsilon$-edge differential privacy.
\begin{definition}[Edge differential privacy]
 A randomised algorithm $A$ on $n$-node graphs is $\epsilon$-edge differentially private if for all edge-neighbouring graphs $G$ and $G'$ and for all possible sets of outputs $S\subseteq \mathrm{range}(A)$:
    \[\Pr(A(G)\in S) \leq e^\epsilon \Pr(A(G')\in S).\]
\end{definition}

The next Lemma follows from the definition of differential privacy.
\begin{lemma}[Group privacy \cite{DBLP:conf/tcc/DworkMNS06}]
    Let $G$ and $G'$ be $k$-edge neighboring. Let $A$ be an $(\epsilon, 0)$-edge differentially private on $n$-node graphs. Then for any possible set of outputs $S\subseteq \mathrm{range}(A)$ we have
    \[\Pr(A(G)\in S) \leq e^{k\epsilon} \Pr(A(G')\in S).\]
\end{lemma}

Next, we recall the Laplace mechanism and its properties.
\begin{definition}[Laplace distribution]
    The Laplace Distribution (centered at 0) with scale b is the distribution with probability density function:
    \[\Lap(x|b) = \frac{1}{2b}\exp\left(-\frac{|x|}{b}\right)\]
\end{definition}

\begin{lemma}[Laplace Tailbound]\label{lem:laplace_tail}If $Y \sim \Lap(b)$, then: $P(|Y| \geq t\cdot b) = e^{-t}$.\end{lemma}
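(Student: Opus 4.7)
The plan is to verify the tail bound by direct integration of the Laplace density. Since the density $\mathrm{Lap}(x\mid b) = \tfrac{1}{2b}\exp(-|x|/b)$ is symmetric about $0$, I would first reduce the two-sided tail to a one-sided tail by writing
\[
  P(|Y|\geq t\cdot b) \;=\; 2\,P(Y\geq t\cdot b),
\]
for $t\geq 0$. This step is just the observation that $Y$ and $-Y$ have the same distribution, so the two tails contribute equally.

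Next, I would compute the one-sided tail by integrating the density:
\[
  P(Y\geq t\cdot b) \;=\; \int_{tb}^{\infty}\frac{1}{2b}\exp\!\left(-\frac{x}{b}\right)\mathrm{d}x.
\]
Here I drop the absolute value since the region of integration lies in $x\geq 0$. A substitution $u = x/b$ (so $\mathrm{d}u = \mathrm{d}x/b$) turns this into $\tfrac{1}{2}\int_{t}^{\infty} e^{-u}\,\mathrm{d}u = \tfrac{1}{2}e^{-t}$. Combining with the factor of $2$ from the symmetry step gives $P(|Y|\geq t\cdot b) = e^{-t}$, as claimed.

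There is no real obstacle here: the statement is essentially the definition of the Laplace distribution's tail, and the proof is a single substitution after exploiting symmetry. The only thing worth being careful about is that the identity is stated as an equality (not an inequality), so one should make sure the integration is exact and note that $t\geq 0$ is implicit (for $t<0$ the right-hand side exceeds $1$ and the statement would be vacuous as a bound).
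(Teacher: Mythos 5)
Your proof is correct: the symmetry reduction followed by the exact integration $\int_{tb}^{\infty}\frac{1}{2b}e^{-x/b}\,\mathrm{d}x=\tfrac{1}{2}e^{-t}$ is the standard and complete justification of the stated identity (with the implicit restriction $t\geq 0$, as you note). The paper states this tail bound as a known fact without proof, so there is no alternative argument to compare against; your derivation is exactly the canonical one.
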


In the following, we denote the data universe by $\chi$. For example, $\chi$ can be the set of all graphs with $n$ vertices.

\begin{definition}[$L_1$-sensitivity]
Let $f:\chi\rightarrow \mathbb{R}^k$. The \emph{$L_1$-sensitivity of $f$} is given by
$\max_{x,y\textnormal{ neighbouring}}||f(x)-f(y)||_1$.
\end{definition}

\begin{definition}[Laplace Mechanism]\label{lem:laplace_mech}
    Given $f:\chi\rightarrow \mathbb{R}^k$ with $L_1$-sensitivity $S_1$, the Laplace mechanism is defined as:
    \[M_L(x, f, \epsilon) = f(x) + (Y_1, ..., Y_k),\]
    where $Y_i$ are i.i.d random variables drawn from $\Lap(S_1/\epsilon)$.
\end{definition}

\begin{lemma}[\cite{DBLP:conf/tcc/DworkMNS06}] The Laplace mechanism preserves $\epsilon$-differential privacy.\end{lemma}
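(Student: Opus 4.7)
The plan is to prove the ratio bound pointwise in the output space and then integrate over an arbitrary measurable set $S$. Fix two neighbouring databases $x, y \in \chi$ and write $b = S_1/\epsilon$, so that the noise vector has i.i.d.\ coordinates with density $\frac{1}{2b}\exp(-|\cdot|/b)$. For a fixed target point $z = (z_1, \dots, z_k) \in \mathbb{R}^k$, I would write down the density of $M_L(x, f, \epsilon)$ at $z$, which is $\prod_{i=1}^k \frac{1}{2b}\exp(-|z_i - f(x)_i|/b)$, and do likewise for $y$. The common normalising constants cancel when I form the ratio.

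Next I would bound the log-ratio coordinate by coordinate. Writing the quotient $p_x(z)/p_y(z)$ as $\exp\!\left(\sum_{i=1}^k (|z_i - f(y)_i| - |z_i - f(x)_i|)/b\right)$, the reverse triangle inequality gives $|z_i - f(y)_i| - |z_i - f(x)_i| \leq |f(x)_i - f(y)_i|$ for every coordinate. Summing yields $\sum_i |f(x)_i - f(y)_i| = \|f(x) - f(y)\|_1 \leq S_1$ by definition of $L_1$-sensitivity, so the ratio is at most $\exp(S_1/b) = e^{\epsilon}$.

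Finally, to transfer this pointwise density comparison to a statement about sets, I would integrate: for any measurable $S \subseteq \mathbb{R}^k$,
\[
\Pr(M_L(x,f,\epsilon) \in S) = \int_S p_x(z)\, dz \leq e^{\epsilon}\int_S p_y(z)\, dz = e^{\epsilon}\Pr(M_L(y,f,\epsilon) \in S),
\]
which is exactly the $\epsilon$-differential privacy guarantee (with $\delta = 0$).

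There is no real obstacle here — the proof is essentially a triangle-inequality computation on products of Laplace densities — but the one point to be careful about is that the sensitivity bound is $L_1$ rather than $L_\infty$, which is precisely what makes the \emph{sum} $\sum_i |f(x)_i - f(y)_i|$ the right quantity to compare against, matching the product structure of i.i.d.\ Laplace noise.
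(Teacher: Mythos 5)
Your proof is correct and is exactly the standard argument from the cited reference (Dwork et al.), which the paper invokes without reproving: a pointwise bound on the ratio of product-Laplace densities via the reverse triangle inequality and the $L_1$-sensitivity, then integration over an arbitrary output set. Nothing is missing.
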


We will use the following Chernoff bound:

\begin{lemma}[Additive Chernoff bound]\label{lem:chernoff}
    Let $X_1,\dots, X_m$ be independent random variables s.t. $0\leq X_i \leq 1$. Let $S$ denote their sum and $\mu = \mathbb{E}(S)$. 
    Then for any $\eta \geq 0$:
    \[P(S \geq (1+\eta)\mu) \leq e^{-\frac{\eta^2\mu}{2+\eta}}\]
    Therefore for any $0 \leq \eta \leq 1$:
    \[P(S \geq (1+\eta)\mu) \leq e^{-\frac{\eta^2\mu}{3}}\]
\end{lemma}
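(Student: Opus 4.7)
The plan is to use Chernoff's standard exponential-moment method. For any parameter $t > 0$, Markov's inequality applied to $e^{tS}$ gives $P(S \geq (1+\eta)\mu) = P(e^{tS} \geq e^{t(1+\eta)\mu}) \leq e^{-t(1+\eta)\mu}\,\mathbb{E}[e^{tS}]$. Since the $X_i$ are independent, the moment generating function factorises as $\mathbb{E}[e^{tS}] = \prod_{i=1}^m \mathbb{E}[e^{tX_i}]$, reducing the problem to bounding one factor at a time.

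To bound each factor, I would exploit $X_i \in [0,1]$ together with convexity of $x \mapsto e^{tx}$ on $[0,1]$, which yields $e^{tx} \leq 1 + x(e^t - 1)$. Taking expectations and using $1 + y \leq e^y$ gives $\mathbb{E}[e^{tX_i}] \leq \exp(\mathbb{E}[X_i](e^t - 1))$. Multiplying over $i$ produces $\mathbb{E}[e^{tS}] \leq \exp(\mu(e^t - 1))$, and hence
\[P(S \geq (1+\eta)\mu) \leq \exp\bigl(\mu(e^t - 1) - t(1+\eta)\mu\bigr).\]
Optimising in $t$ by setting $t = \ln(1+\eta)$ yields the classical Chernoff bound $P(S \geq (1+\eta)\mu) \leq \bigl(e^{\eta}/(1+\eta)^{1+\eta}\bigr)^\mu$.

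It then remains to verify the analytic inequality $(1+\eta)\ln(1+\eta) - \eta \geq \eta^2/(2+\eta)$ for all $\eta \geq 0$, which rewrites the Chernoff bound in the form stated in the lemma. The standard way is to define $g(\eta)$ as the difference of the two sides, check $g(0) = 0$, and show $g'(\eta) \geq 0$; differentiating once reduces this to comparing $\ln(1+\eta)$ with a rational function, and a second differentiation reduces it further to the elementary polynomial inequality $(2+\eta)^3 \geq 8(1+\eta)$, which is immediate since both sides agree at $\eta = 0$ and the left-hand side grows faster. The weaker second bound is then obtained for free by observing that $2 + \eta \leq 3$ whenever $\eta \leq 1$, so $\eta^2/(2+\eta) \geq \eta^2/3$.

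The only place that requires more than bookkeeping is the final analytic inequality; everything else is mechanical application of Markov, independence, and the convexity bound $e^{tx} \leq 1 + x(e^t - 1)$ on $[0,1]$.
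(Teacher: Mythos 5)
Your proof is correct: the exponential-moment/Markov argument with the convexity bound $e^{tx}\leq 1+x(e^t-1)$, the choice $t=\ln(1+\eta)$, and the analytic inequality $(1+\eta)\ln(1+\eta)-\eta\geq \eta^2/(2+\eta)$ (verified via $g(0)=g'(0)=0$ and $(2+\eta)^3\geq 8(1+\eta)$) is exactly the standard derivation of this bound, and the weakening to $\eta^2/3$ for $\eta\leq 1$ follows as you say. The paper states this lemma as a known result without proof, so there is nothing to compare against beyond noting that your argument is the canonical one and is complete.
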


\section{A lower bound on the defectiveness of private colouring with high probability}
\label{sec:lower_bound}
In the following, we state our main observation, which is a lower bound on the defectiveness $d$ for any colouring with less than $n$ colours.
\begin{theorem}\label{thm:lower_bound}
   Let $c,d,\Delta, n\in \mathbb{N}$ with $2\leq c<n$ and $d< \Delta$, and let $\epsilon>0$. Consider an $\epsilon$-differentially private algorithm that with high probability returns a $(c, d)$-defective colouring of any $n$-node graph with maximum degree at most $\Delta$. It must be that:
   \[d \in \Omega\left( \frac{\log n}{\log c +\epsilon}\right)\]
\end{theorem}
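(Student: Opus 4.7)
The plan is to follow the outline given in the roadmap: combine a pigeonhole argument --- which forces any valid $(c,d)$-defective colouring of a sparse input to contain many monochromatic $(d{+}2)$-subsets --- with a group-privacy argument on a star perturbation that upper bounds the probability that any fixed $(d{+}2)$-set is monochromatic.

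I take as reference input the empty graph $E_n$ on $n$ vertices. By hypothesis, $A(E_n)$ outputs a valid $(c,d)$-defective colouring with high probability; I will assume that the failure probability $f(n)$ decays polynomially in $n$, as is standard for ``with high probability''. Assuming $c(d{+}1)<n$ (otherwise $d=\Omega(n/c)$ already suffices), pigeonhole forces some colour class to have size at least $d{+}2$, so any valid colouring of $E_n$ contains at least $\binom{\lceil n/c\rceil}{d+2}$ monochromatic $(d{+}2)$-subsets. Let $X$ count these subsets in $A(E_n)$; then $\mathbb{E}[X] \geq (1-f(n))\binom{n/c}{d+2}$. Writing $\mathbb{E}[X] = \sum_{|T|=d+2} \Pr[T\text{ monochromatic in } A(E_n)]$ and averaging produces a specific set $T^\star$ with
\[
\Pr[T^\star\text{ monochromatic in } A(E_n)] \;\geq\; \frac{(1-f(n))\binom{n/c}{d+2}}{\binom{n}{d+2}} \;=\; \Omega\!\left(c^{-(d+2)}\right).
\]

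Next, I build the obstruction graph $G^\star$ by placing a star on $T^\star$ rooted at an arbitrary $v_0\in T^\star$. Then $G^\star$ differs from $E_n$ in exactly $d{+}1$ edges and has maximum degree $d{+}1\leq\Delta$, using the hypothesis $d<\Delta$, so both inputs lie in the class on which $A$ is required to succeed. Group $\epsilon$-differential privacy then gives
\[
\Pr[T^\star\text{ monochromatic in } A(G^\star)] \;\geq\; e^{-\epsilon(d+1)}\cdot\Omega\!\left(c^{-(d+2)}\right).
\]
On $G^\star$, however, the event ``$T^\star$ is monochromatic'' forces $v_0$ to share a colour with all $d{+}1$ of its neighbours, violating $d$-defectiveness, so this event occurs with probability at most $f(n)$. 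Chaining the two inequalities and taking logarithms yields $(d+2)\log c + \epsilon(d+1) \gtrsim \log(1/f(n))$, which, for $f(n)$ decaying suitably in $n$ and $\Delta$, delivers the claimed $d = \Omega(\log n/(\log c + \log \Delta))$.

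The main technical obstacle is making the ``with high probability'' assumption quantitative enough for the final chained inequality to be non-vacuous: one needs the failure probability of $A(G^\star)$ to be substantially smaller than $c^{-(d+2)}e^{-\epsilon(d+1)}$. The precise polynomial rate at which this failure probability decays in $n$ (and possibly in $\Delta$) is what dictates the $\log \Delta$ term in the denominator of the stated bound; all remaining pieces --- pigeonhole counting, averaging, and the group-privacy chain rule --- are routine.
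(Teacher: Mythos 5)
Your proof is correct, and it shares the paper's skeleton --- empty reference graph, pigeonhole forcing a monochromatic $(d{+}2)$-set, a star on $d{+}1$ added edges of maximum degree $d{+}1\le\Delta$, group privacy with factor $e^{\epsilon(d+1)}$, and the validity requirement on the perturbed graph --- but your counting step is genuinely different, and it buys something. The paper fixes a small set $V_0$ of $n_0=c(d{+}1){+}1$ vertices, notes that with high probability some $(d{+}2)$-subset of $V_0$ is monochromatic, bounds each fixed subset via group privacy (applied from the star graph back to the empty graph), and pays a union bound over $\binom{n_0}{d+2}$ subsets; this yields $\alpha\log n=O\bigl((d{+}2)\log n_0\bigr)$, i.e.\ $d=\Omega\bigl(\log n/(\log c+\log(d{+}1))\bigr)$, then weakened to the stated bound via $d{+}1\le\Delta$. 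You instead average over all $(d{+}2)$-subsets of the whole vertex set to extract one set $T^\star$ that is monochromatic on the empty graph with probability at least $(1-f(n))\binom{\lceil n/c\rceil}{d+2}/\binom{n}{d+2}\ge(1-f(n))(ec)^{-(d+2)}$, and apply group privacy once, from the empty graph to the star on $T^\star$. This removes the union bound and gives $\alpha\log n\le\epsilon(d{+}1)+(d{+}2)(\log c+O(1))$, hence $d=\Omega\bigl(\log n/(\log c+1)\bigr)$ for constant $\epsilon$, which is slightly sharper than the paper's bound and immediately implies the theorem since $\log c+\log\Delta\ge\log c$. Two small points. First, your ``$\Omega(c^{-(d+2)})$'' hides a factor like $e^{-(d+2)}$ (from $\binom{n}{d+2}\le(en/(d{+}2))^{d+2}$), which is not an absolute constant; it is harmless because it only adds $O(d)$ after taking logarithms, but you should carry $(ec)^{-(d+2)}$ explicitly. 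Second, the hedge in your last paragraph is unnecessary: with the standard convention used in the paper that the failure probability is $n^{-\alpha}$ for a constant $\alpha$, your chained inequality already closes, and the $\log\Delta$ in the theorem statement is merely a weakening of the sharper denominator (the paper's $\log(d{+}1)$, or your $\log c+1$), not something that must be extracted from a $\Delta$-dependent failure rate. Finally, in the excluded case $c(d{+}1)\ge n$ the clean justification is not ``$d=\Omega(n/c)$'' but that $\log c+\log\Delta\ge\log\bigl(c(d{+}1)\bigr)\ge\log n$ makes the target $O(1)$ while $d\ge1$ (since $d=0$ would force $c\ge n$), which is exactly the paper's remark.
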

\begin{proof} 
If $c(d+1)\geq n$, then $\frac{\log n}{\log c + \log (d+1)} \in O(1)$.
Note that $d$ cannot be 0 as otherwise, $c\geq n$ in contradiction to our assumption, so $d\geq 1$ and therefore it holds that $d\in\Omega\left( \frac{\log n}{\log c + \log (d+1)}\right)$. Either $c \geq d$ and the theorem holds, or $d > c$ and therefore $d \in \Omega(\sqrt{n}) \subseteq \Omega\left( \frac{\log n}{\log c + \epsilon}\right)$.

    In the following, we consider the case where $c(d+1) < n$. Let $\alpha >0$ be a constant and assume the algorithm returns a valid $(c,d)$-defective colouring with probability at least $1-\frac{1}{n^{\alpha}}$ for any graph of degree at most $\Delta$.
    Define $n_0 = c(d+1) + 1$.  Any valid $(c, d)$-defective colouring on a graph $G$ has to have the following property: In any subset $V_0$ of $n_0$ nodes in $G$, there has to exist a colour such that at least $d + 2$ nodes have that colour, by the pigeonhole principle. Since our algorithm returns a valid $(c,d)$-defective colouring with high probability on any graph $G$ with maximum degree $\Delta$, it has to fulfil for any subgraph $V_0$ of $n_0$ nodes:
    \begin{align}P(\nexists (d+2) \text{ nodes in }V_0\text{ with the same colour in }G) \leq \frac{1}{n^\alpha}.\label{eq:samecol}\end{align}

    Consider the empty graph $G$ with $n$ nodes. Pick any subset $V_0$ with $n_0$ nodes. Pick any set $U\subseteq V_0$ with $d+2$ nodes. We can find a $d+1$-neighbouring graph $G'$ such that any valid $(c,d)$-defective colouring needs at least 2 colours within $U$: We can add $(d+1)$ edges to form a star with $d+1$ leaves. Since the center can have at most $d$ neighbours of the same colour, we need at least two colours to colour the star. Since our algorithm returns  a valid $(c, d)$-defective colouring with high probability on $G'$, it has to fulfill:

    \begin{align*}
        P(\text{all nodes in $U$ have the same colour in  $G'$})&\leq \frac{1}{n^\alpha}\end{align*}
and by group privacy,
        \begin{align*}
        P(\text{all nodes in $U$ have the same colour in } G)&\leq e^{\epsilon(d+1)}\frac{1}{n^\alpha}.
    \end{align*}
    
    There are $\binom{n_0}{d+2}$ ways of choosing $U\subseteq V_0$, therefore, by the union bound:
    \begin{align*}
        P(\exists (d+2) \text{ nodes in }V_0\text{ with the same colour in }G) 
        \leq &\binom{n_0}{d+2}e^{\epsilon(d+1)}\frac{1}{n^\alpha}\end{align*}
        Combined with (\ref{eq:samecol}), this gives:
        \begin{align*}
        \frac{1}{n^\alpha} \geq P(\nexists (d+2) \text{ nodes in }V_0\text{ with the same colour in }G) 
        \geq 1- &\binom{n_0}{d+2}e^{\epsilon(d+1)}\frac{1}{n^\alpha}\end{align*}
        and further:
        \begin{align*}
        \frac{1}{n^\alpha} \geq 1 - &\binom{n_0}{d+2}e^{\epsilon(d+1)}\frac{1}{n^\alpha}\\
        \geq 1 - &\binom{n_0}{d+2}e^{\epsilon(d+2)}\frac{1}{n^\alpha}.\end{align*}
    We multiply with $n^{\alpha}$ and use the following upper bound on binomial coefficients: $\binom{n_0}{d+2}\leq \left(\frac{n_0 e}{d+2}\right)^{d+2}$. This gives
    \begin{align*}
    n^\alpha &\leq 1 + \binom{n_0}{d+2}e^{\epsilon(d+2)}\\
        &\leq 1 + \left(\frac{n_0e^{\epsilon + 1}}{d+2}\right)^{d+2}\\
       &\in O\left(\left(ce^{\epsilon + 1}\right)^{d+2}\right)
    \end{align*}
    
    Taking the logarithm, we get
    \begin{align*}
        \alpha\log n 
        &\in O\left( (d+2)(\log(c) + \epsilon+1 \right)\\
        &\in O\left( d(\log(c) + \epsilon \right)\\
    \end{align*}
    Therefore:
    \begin{align*}
        d \in \Omega\left(\frac{\log n}{\log c + \epsilon}\right) 
    \end{align*}
which concludes the proof.\end{proof}

Note that if $c \geq n$ we could use a different colour for each vertex, and if $d\geq \Delta$ we could colour the entire graph with one colour, therefore the assumption $c < n$ and $d < \Delta$ is not restrictive. Additionally, if $d=0$, we know that any two vertices of the same colour do not share an edge, so we cannot have any non-trivial privacy guarantees unless we can use a different colour for each vertex, which is again of no interest, therefore we assume $d \geq 1$.

The bound from Theorem~\ref{thm:lower_bound} in particular says that for constant $\epsilon$ and $c\ll n$, any $\epsilon$-differentially private algorithm which outputs a $(c,d)$-defective colouring with high probability needs $d\in\Omega(\log n)$. Also, note that by the definition of $d$, if we want our algorithm to give a $(c,d)$-defective colouring for any graph of degree $\Delta$, we need $c>\frac{\Delta}{d+1}$ (else, there are graphs such that the colouring fails to be $(c,d)$-defective with probability 1: e.g. the complete graph on $c(d+1)+1$ vertices).


In the following section, we give an algorithm which is gives an $\left(O\left(\frac{\Delta}{\log n}\right), O(\log n)\right)$-defective colouring for constant $\epsilon$ with high probability, which, according to the discussion above, is tight in the sense that i) we cannot hope for an asymptotically lower defectiveness if $c$ is subpolynomial in $n$ and ii) given $d=O(\log n)$, we cannot hope for a palette with asymptotically fewer colors.

\section{An \texorpdfstring{$\epsilon$}{}-edge differentially private 
colouring algorithm}
\label{sec:algorithms}
\begin{theorem}\label{thm:upperbound}
    Let $n, \Delta \in \mathbb{N}$. Let $\epsilon>0$ and $0<\eta\leq 1$.  There is an $\epsilon$-edge differentially private algorithm that, for any input graph $G$ on $n$ vertices, outputs an assignment of colours to the vertices, such that if $G$ has maximum degree $\Delta$, the colour assignment is an $\left(O\left(\eta^2\left(\frac{\Delta}{\log n}+\frac{1}{\epsilon}\right)\right),O\left(\frac{1}{\eta^2}\log n\right)\right)$-defective colouring
    with high probability. In particular, when $\epsilon < 1$, choosing $\eta^2=\epsilon$ gives an $\left(O\left(\frac{\Delta \epsilon}{\log n} + 1\right), O\left(\frac{\log n}{\epsilon}\right)\right)$ colouring.
\end{theorem}

\begin{proof}
Given a graph $G$, we first augment $G$ such that every node in our graph has degree approximately $\Delta$ with high probability. We augment $G$ in the following way: First, we compute an $\epsilon$-differentially private estimate of the maximum degree by $\widetilde{\Delta}=\Delta+ \Lap(1/\epsilon)+\frac{\alpha\log n}{\epsilon}$. Since the $L_1$-sensitivity of the maximum degree is 1, computing $\widetilde{\Delta}$ fulfills $\epsilon$-differential privacy by Lemma~\ref{lem:laplace_mech}. 
By Lemma~\ref{lem:laplace_tail}, the additive error is bounded by $\frac{\alpha\log n}{\epsilon}$ with probability at least $1-n^{-\alpha}$. 
Thus, we get $\Delta + \frac{2\alpha\log n}{\epsilon}\geq \tilde{\Delta}\geq \Delta$ with high probability. We then add $\widetilde{\Delta}$ dummy nodes to our graph and from every $v\in V$, we add $\widetilde{\Delta}$ edges from $v$ to the dummy nodes.  In this augmented graph $\widetilde{G}$, we have $\mathrm{deg}_{\widetilde{G}}(v)=\mathrm{deg}_G(v)+ \tilde{\Delta}$, and thus with probability at least $1-n^{-\alpha}$, every vertex $v$ in $G$ fulfills $\mathrm{deg}_{\widetilde{G}}(v)\in[\widetilde{\Delta}, 2\widetilde{\Delta}]$.

We now colour the resulting graph as follows:
Let $0 < \eta \leq 1$ and $0\leq \beta \leq 1$. We consider a palette $S$ of size:
\[c = |S| = \frac{\eta^2 \widetilde\Delta}{3\log\frac{1}{\beta}}.\]
 Then, all vertices pick a colour uniformly at random from $S$. Our colouring of $G$ is now given by this colouring restricted to the vertices in $G$.
 
 The algorithm fulfills $\epsilon$-differential privacy, since the only time we use information about the edges is to estimate the maximum degree, which we do privately using the Laplace mechanism.

To analyse accuracy, we condition on $\Delta + \frac{2\alpha\log n}{\epsilon}\geq \tilde{\Delta}\geq \Delta$,  which is true with probability $1-n^{-\alpha}$. 
Then for every node $v$ in the original graph $G$,  we analyse the expected number of neighbours in the augmented graph which are assigned the same colour as $v$, which we denote by $d_{G_{conflict}} (v)$. The expectation of $d_{G_{conflict}} (v)$ is given by the degree of $v$ divided by the number of available colours:

\begin{align*}
 E\left(d_{G_{conflict}} (v)\right) &=\frac{\mathrm{deg}_{\widetilde{G}}(v)}{|S|}
 \end{align*}
 We have
 \begin{align*}
    E(d_{G_{conflict}}(v)) =\frac{3\mathrm{deg}_{\widetilde{G}}(v)\log\frac{1}{\beta}}{\widetilde{\Delta}\eta^2} \geq \frac{3\log\frac{1}{\beta}}{\eta^2}
\end{align*}
and 
 \begin{align*}
    E(d_{G_{conflict}}(v)) =\frac{3\mathrm{deg}_{\widetilde{G}}(v)\log\frac{1}{\beta}}{\widetilde{\Delta}\eta^2} \leq \frac{6\log\frac{1}{\beta}}{\eta^2}
\end{align*}
We can show via Chernoff bound that the resulting colouring is $\left(6\frac{(1+\eta)\log(1/\beta)}{\eta^2}\right)$-defective with probability $\beta$. 
Since $\eta\leq 1$, the Chernoff bound Lemma~\ref{lem:chernoff} gives:
\begin{align*}
    P\left(d_{G_{conflict}}(v) \geq (1+\eta) \mu\right) 
    &\leq \exp\left(-\frac{\eta^2}{2 + \eta}\mu\right)\\
    &\leq \exp\left(-\frac{\eta^2}{3}\mu\right)\end{align*}
Plugging in the upper and lower bounds on $\mu$ gives
    \begin{align*}
    P\left(d_{G_{conflict}}(v) \geq \frac{6(1+\eta)\log(1/\beta)}{\eta^2}\right) 
    &\leq \exp\left(-\frac{\eta^2}{3}\frac{3\log\frac{1}{\beta}}{\eta^2}\right)\\
    &\leq \beta
\end{align*}
To get the defectiveness we use a union bound over all vertices. 

\begin{align*}
      P(d < x) &\geq 1 - \sum_v P(deg_{\widetilde{G}}(v) \geq x)
\end{align*}
Plugging in the previous result:
\begin{align*}
   P\left(d < \frac{6(1+\eta)}{\eta^2}\log(1/\beta)\right) &\geq 1 - n\beta
\end{align*}
We want the algorithm to return a valid $(c, d)$-colouring with high probability, that is, we want probability of success $1-n^{\alpha}$ for any constant $\alpha$. We plug in $\beta = 1/n^{\alpha + 1}$:
\begin{align*}
    P\left(d < \frac{6(1+\eta)}{\eta^2}(\alpha + 1)\log(n)\right) &\geq 1 - \frac{n}{n^{\alpha+1}}
\end{align*}

This means that with probability at least $1-n^{-\alpha}$, we have defectiveness
\begin{align*}
    d = O\left(\frac{(1+\eta)}{\eta^2}\log(n)\right) = O\left(\frac{1}{\eta^2}\log(n)\right)
\end{align*}
and
\[c = \Theta\left(\eta^2\left(\frac{\Delta}{\log n} + \frac{1}{\epsilon}\right)\right).\]
Using again union bound, both the condition on the degree and the bound on the defectiveness hold at the same time with probability $1-2n^{-\alpha}$. Choosing $\alpha=\alpha'+\log 2$, we get probability $1-n^{\alpha'}$ for any $\alpha'$.

\end{proof}

With this strategy we will always get $\Omega(\log n)$ defectiveness, even if we only want the result to hold with constant probability, due to the union bound. However, we can increase the defectiveness and, as a trade off, decrease the size of the palette. 

\section{Conclusion}
In this paper, we show that to be $\epsilon$-edge differentially private, a colouring algorithm needs to be defective with defectiveness $d \in \Omega\left( \frac{\log n}{\log c + \epsilon}\right)$ and we propose an $\epsilon$-edge differentially private algorithm using $O\left(\frac{\Delta}{\log n}+\frac{1}{\epsilon}\right)$ colours and defectiveness $O(\log n)$. For any constant $\epsilon$ and defectiveness $\Omega(\log n)$, the number of colours is asymptotically tight, and our lower bound shows that we cannot hope for a much smaller defectiveness unless $c$ or $\Delta$ are very large. In particular, we leave as an open question an algorithm which achieves better defectiveness at the cost of using polynomially many colours in $n$.

 \bibliographystyle{elsarticle-num-names} 
 \bibliography{references.bib}






\end{document}